\newtheorem{theorem}{Theorem}
\newenvironment{proof}[1][\proofname]{\par
  \normalfont
  \topsep6\p@\@plus6\p@ \trivlist
  \item[\hskip\labelsep{\bfseries #1}\@addpunct{.}]\ignorespaces
}{%
  \endtrivlist
}
\newcommand{\proofname}{\it{Proof}}
\def\qed{\hfill{$\square$}}
\begin{document}

\title{Typical Approximation Performance for Maximum Coverage Problem}
\date{\today}

\author{Satoshi Takabe}
\email{E-mail: s{\_}takabe@nitech.ac.jp}
\affiliation{Department of Computer Science, Nagoya Institute of Technology, Gokiso-cho, Showa-ku, Nagoya, Aichi, 466-8555, Japan}
\affiliation{RIKEN Center for Advanced Intelligence Project 15F, 1-4-1, Nihonbashi, Chuo-ku, Tokyo, 103-0027, Japan
}
\author{Takanori Maehara}
\affiliation{RIKEN Center for Advanced Intelligence Project 15F, 1-4-1, Nihonbashi, Chuo-ku, Tokyo, 103-0027, Japan
}
\author{Koji Hukushima}
\affiliation{Graduate School of Arts and Sciences, The University of Tokyo, 3-8-1 Komaba, Meguro-ku, Tokyo 153-8902, Japan}

\begin{abstract}
This study investigated typical performance of approximation algorithms known as belief propagation, greedy algorithm, and
linear-programming relaxation for maximum coverage problems on sparse biregular random graphs.
After using the cavity method for a corresponding hard-core lattice--gas model, 
results show that two distinct thresholds of replica-symmetry and its breaking exist
 in the typical performance threshold of belief propagation.
 In the low-density region, the superiority of three algorithms
 in terms of a typical performance threshold is obtained by some theoretical analyses.
 Although the greedy algorithm and linear-programming relaxation have the
 same approximation ratio in worst-case performance,
  their typical performance thresholds are mutually different, indicating the importance of typical performance.
Results of numerical simulations validate the theoretical analyses and imply further mutual relations of approximation algorithms 
\end{abstract}

\pacs{75.10.Nr, 89.70.Eg, 89.20.Ff, 02.60.Pn}

\maketitle
\section{Introduction}
Approximation algorithms, which are important for hard optimization problems, have attracted researchers' interest
because, for NP-hard optimization problems, one 
encounters difficulty when solving optimal solutions exactly in polynomial time of the problem size.
Since the P versus NP problem arose, development and performance analyses of approximation algorithms
have persisted as a central issue of computer science and operations research.

Two performance evaluations of approximation algorithms exist: worst-case performance and typical (or average-case) performance.
As described in this paper, we specifically examine the latter mainly using statistical--mechanical methods, 
 although the former has been investigated mainly in the literature of theoretical computer science~\cite{Vazirani2003}.
Worst-case performance is defined by the pair of an optimization problem and its approximation algorithm.
An approximation ratio is then defined by the maximal ratio of an optimal value to an approximation value
over all instances if the problem is a maximization problem (and vice versa, otherwise).
It is important to provide strict performance guarantee of the algorithm,
 although the worst-case instance is sometimes pathological.
However, the typical performance is defined as the average performance of approximation algorithm
 for a given optimization problem over randomized instances.
It is sometimes useful for practical use. It sheds light on properties of optimization problems and
 approximation algorithms in a perspective that is different from worst-case analysis.

 An interesting point of the typical property of optimization problems
 is found in a close relation to the spin-glass theory in statistical physics.
Extensive studies of various computational problems have revealed that the concept of replica symmetry (RS) and
its breaking (RSB) in spin-glass theory reflects average computational complexity~\cite{Monasson1999} and 
structure of the solution space~\cite{Mezard2003} of the problems.

Typical performance of approximation algorithms often exhibits a phase transition on typical goodness
 or accuracy of approximation, which is called threshold phenomena in the literature of theoretical computer science~\cite{Karp1981}.
Some approximation algorithms for minimum vertex covers (min-VC) are good examples of threshold phenomena.
The problem is defined on a graph. The randomized problem is
 characterized using a random graph ensemble with $c$ being the average degree.
From a statistical--mechanical perspective, the problem on Erd\"os-R\'enyi random graphs
 has the RS-RSB threshold at $c=e=2.71\dots$~\cite{Weigt2000a}.
Moreover, three approximation algorithms have been investigated: belief propagation (BP)
 (or message passing)~\cite{Weigt2006}, greedy leaf-removal algorithm~\cite{Bauer2001}, and
 linear-programming (LP) relaxation~\cite{Dewenter2012,Takabe2014a}.
Approximation algorithms other than BP naively have no direct connection to the spin-glass theory.
Nevertheless, for Erd\"os-R\'enyi random graphs, these algorithms have the same performance threshold
as the RS-RSB threshold.
However, subsequent studies of general random graphs~\cite{Takabe2016a}
indicate that their thresholds are not equivalent for some random graphs because of their graph structure.
Using statistical--mechanical techniques, typical performance has been studied
 for variants of leaf removal~\cite{Liu2012,Takabe2014,Lucibello2014,Habibulla2015,Takahashi}
 and relaxation technique such as LP relaxation~\cite{Schawe2016,Takabe2016} and semidefinite-programming relaxation~\cite{Javanmard2015}.
These studies have revealed not only typical approximate performance itself but also a suggestive connection
 to typical properties of optimization problems, random graph structure, and the spin-glass theory.

As described in this paper, we examine the unweighted maximum coverage (max-COV) problem defined in the next section.
Although the max-COV belongs to the class of NP-hard, 
it has several practical applications such as pan and scan problems~\cite{Johnson2011a}, 
multi-topic blog watch~\cite{Saha2009}, and text summarization~\cite{Takamura2009}.
Our main purpose is to examine the typical performance of approximation algorithms for the problem and to compare their performance thresholds.
We specifically examine three approximation algorithms:
belief propagation, greedy algorithm, and LP relaxation.
Some statistical--mechanical methods applied to both analytical and
numerical analyses together with mathematical rigorous discussions clarify typical performance of those algorithms 
and a suggestive mutual relation among approximation algorithms.

This paper is organized as follows. 
In the next section, we define details of the max-COV and its approximation algorithms.
As a random graph ensemble, biregular random graphs are also defined.
In Section~\ref{sec_bp}, a hard-core lattice--gas model for the problem is introduced. Its BP equations are obtained 
based on Bethe--Peierls approximation.
In Section~\ref{sec_rs}, we present a study of the typical performance of BP using the RS cavity method.
Calculation of the spin-glass susceptibility provides the threshold below which BP typically approximate max-COV with high accuracy.
In Section~\ref{sec_gre}, the greedy algorithm is analyzed based on a mean-field rate equation of its deletion process.
Using the obtained solution, the typical performance threshold of the greedy algorithm is evaluated.
In Section~\ref{sec_lp}, LP-relaxed approximate values are evaluated rigorously. The theorem is proved using the weak duality theorem.
In Section~\ref{sec_num}, we describe some numerical results which 
support the validity of theoretical analyses presented in the previous sections.
We also execute some additional simulations to consider the typical performance of a modified greedy algorithm and randomized rounding of LP relaxation.
These results provide suggestive relation between approximation algorithms. 
The last section is devoted to a summary and discussion of the results.

 \section{
 Max-Cover problem and 
 approximation algorithms}
\subsection{Maximum coverage problem}
The max-COV is defined as follows: 
Let $S$ be a set of $M$ elements and $\mathcal{S}=\{S_1,\cdots,S_N\}$ be a collection of subsets of $S$.
For a given positive integer $K(\le N)$, the problem is to choose at most $K$ subsets to maximize the total number of elements
 in the union of chosen subsets.
An example of an instance with $M=4$ and $N=3$ is shown at the left-hand side of Fig.~\ref{fig_0}.
Given that $K=2$, one can select subsets $S_1$ and $S_3$ such that all elements are included in the union of the subsets.

\begin{figure}
 \includegraphics[width=0.8\linewidth]{./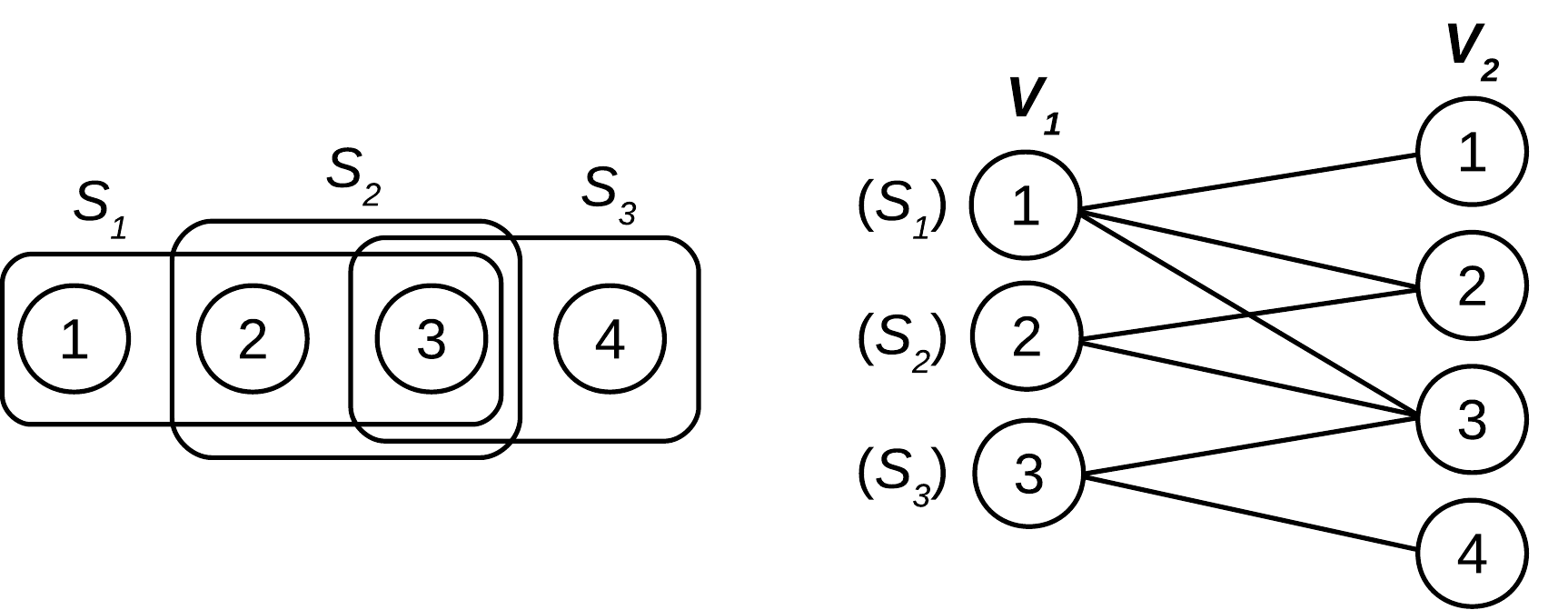}
\caption{Example of the maximum coverage problem. (Left) An example with $M=4$ and $N=3$. $S_1$, $S_2$, and $S_3$ represent subsets of $S=\{1,2,3,4\}$.
(Right) A bipartite graph corresponding to the left example. Each vertex in $V_1$ stands for a subset in the left example.}\label{fig_0}
\end{figure}

Converting an instance to a bipartite graph, one obtains an integer-programming (IP) representation of the max-COV.
Let $G=(V_1,V_2,E)$ be a bipartite graph where each edge in $E$ connects to a vertex in $V_1$ and a vertex in $V_2$.
Vertices in $V_1$ and $V_2$ are labeled respectively by $i\in\{1,\dots,N\}$ and $a\in\{1,\dots,M\}$.
$V_1$ corresponds to a collection of subsets $\mathcal{S}$, whereas $V_2$ represents $S$.
Each edge $(i,a)$ is set if subset $S_i$ includes element $a$.
For Fig.~\ref{fig_0}, the left example is converted to a bipartite graph in the right-hand side.
We then introduce binary variables $\{x_i\}$ and $\{y_a\}$, respectively, to $V_1$ and $V_2$.
Also, $x_i$ is set to one if vertex $i$ (or subset $S_i$) is selected and zero otherwise.
Similarly, $y_a$ is set to one if vertex $a$ is connected to a selected vertex (or element $a$ belongs to
the union of chosen subsets) and zero otherwise.
The problem is therefore represented by the following IP problem:
\begin{equation}
\begin{array}{lll}
\mbox{Max.} &\displaystyle \sum_{a=1}^My_a, &\\
\mbox{subject to\ } &\displaystyle y_a\le \sum_{i;\,(i,a)\in E}x_i & \forall a\in V_2,\\
&\displaystyle \sum_{i=1}^N x_i\le K,&\\
&x_i\in\{0,1\}\:\forall i\in V_1,& y_a\in\{0,1\}\:\forall a\in V_2.
\end{array}
\label{eq_s1_1}
\end{equation}
The inequality in the second constraint can be replaced with an equality
sign because the problem is a maximization problem.

As described in this paper, typical-case property of the max-COV is analyzed by randomizing its instance.
Then, we introduce $\rho_x =K/N$ and assume that $\rho_x\in[0,1]$ is constant.
For random bipartite graphs, we specifically examine $(L,R)$-biregular
random graphs where degrees of each vertex in $V_1$ and $V_2 $ respectively denote
$L$ and $R$.
Using this simple random graph ensemble, our statistical--mechanical analyses
reveal interesting properties of the problem.

To take the large graph limit, the number of vertices in $V_2$ is
rescaled as $M=\alpha N$
with a constant factor $\alpha$. 
We assume that a random graph is sparse, i.e., $L$ and $R$ are constant with respect to $N$.
For the randomized max-COV, we define an average optimal cover ratio
over random graphs with cardinality $N+M$ and a given $\rho_x$ by
\begin{equation}
\rho_y(\rho_x;N)=\frac{1}{M}\overline{\sum_{a=1}^My_a^{\mathrm{opt}}(G,\rho_x)}, \label{eq_s1_2}
\end{equation}
where $\{y_a^{\mathrm{opt}}(G,\rho_x)\}$ represents optimal solutions in $V_2$ and $\overline{(\cdots)}$ represents
 an average over random bipartite graphs with size $N+M$.
Its limiting value to $N\rightarrow\infty$ is denoted by $\rho_y$. We
 simply call it the average optimal cover ratio.

\subsection{Approximation algorithms}\label{sec_app}
We introduce three approximation methods for max-COVs.
The first algorithm is belief propagation (BP).
The recursive equations called BP equations are derived from Bethe--Peierls approximation
 for a spin system corresponding to a given optimization problem.
For systems on trees, graphs with no cycles, the Bethe--Peierls approximation and BP are exact.
In general, however, there exist cycles in a graph yielding
correlations between variables or spins, which results in inexact estimation of solutions (or configurations).
BP on graphs with some cycles is regarded as an approximation method. It is called loopy BP~\cite{Mezard2009}.

The second one is a simple greedy algorithm. 
At each step, this algorithm has the following procedure: (i) choose one vertex named $i$ with the maximum degree in $V_1$,
(ii) delete vertices neighboring to vertex $i$ from $V_2$,
and (iii) update $V_1$ to $V_1\backslash i$ and return to (i) if $|V_1|>N-K$.
This simple algorithm gives an approximation ratio of $1-1/e$~\cite{Nemhauser1978}.
The problem cannot be approximated within this ratio unless P$=$NP~\cite{Feige1998}.

The last is linear programming (LP) relaxation.
An integer programming problem including the max-COV is relaxed to LP problems by replacing integral constraints with real constraints.
The LP-relaxed max-COV is given as
\begin{equation}
\begin{array}{lll}
\mbox{Max.} &\displaystyle \sum_{a=1}^My_a, &\\
\mbox{subject to\ } &\displaystyle y_a\le \sum_{i;\,(i,a)\in E}x_i & \forall a\in V_2,\\
&\displaystyle \sum_{i=1}^N x_i\le K,&\\
&x_i\in [0,1]\:\forall i\in V_1,& y_a\in[0,1]\:\forall a\in V_2.
\end{array}
\label{eq_s1_3}
\end{equation}
Actually, LP approximation value gives the upper bound of the problem.
Because LP problems are solvable exactly in polynomial time, LP relaxation is a widely used approximation technique.
The approximate solution obtained by LP relaxation usually involves non-integers.
One must round those numbers appropriately to obtain an approximate integral solution for the IP problem.
Here, we consider randomized rounding~\cite{Raghavan1987}.
Using the obtained approximation solution $\{x_i^{\mathrm{LP}}\}$, one selects vertex $i\in V_1$ to set $I\subset V_1$
 with probability $x_i^{\mathrm{LP}}/K$ up to $K$ vertices.
Then, the rounded solution $\{x_i^{\mathrm{LPr}}\}$ is set to $x_i^{\mathrm{LPr}}=1$ for $i\in I$
 and $x_i^{\mathrm{LPr}}=0$ otherwise.
 The rounded approximation value is readily calculated from $\{x_i^{\mathrm{LPr}}\}$.
In terms of the worst-case performance, LP relaxation and its randomized rounding in expectation has the same approximation ratio as the greedy algorithm.

As described in this paper, we study the typical performance of these approximation algorithms.
It is evaluated by approximate values averaged over randomized max-COVs defined in the last subsection.
Similar to the average optimal cover ratio, the average cover ratio is defined as the average ratio of the approximate value
 to the cardinality $M$ of vertex set $V_2$.
It is regarded as exhibiting good typical performance if the average cover ratio obtained by an approximation algorithm is equal to the average optimal cover ratio
in the large-$N$ limit.
The main aim of this paper is to evaluate the typical performance threshold of $\rho_x$ below which an approximate
algorithm exhibits good typical performance with $L$ and $R$ fixed.
Evaluation of the approximation algorithms is accomplished by comparing their typical performance thresholds.

\begin{widetext}
\section{BP equations for Max-COV}\label{sec_bp}
As explained in this section, BP equations for max-COV are derived based on 
the statistical--mechanical model for the problem. 
We set particles on vertices in $V_1$ and $V_2$, which respectively occupy
vertex $i$ and $a$ if $x_i=1$ and $y_a=1$.
The hard-core lattice-gas model for max-COVs on bipartite graph $G$ is
then naively given as the following partition function.
\begin{equation}
\Xi_0(\mu;G)=\sum_{\bm{x}\in\{0,1\}^N}\sum_{\bm{y}\in\{0,1\}^M}\exp\left(\mu\sum_{a=1}^My_a\right)H\left(K-\sum_{i=1}^Nx_i\right)
\prod_{a=1}^{M}H\left(\sum_{i\in\partial a}x_i-y_a \right), \label{eq_s2_1}
\end{equation}
Therein, $H(x)=1\,(x\ge 0),\,0\,(x<0)$ is the Heaviside step function and
$\partial a=\{i\in V_1\mid (i,a)\in E\}$ stands for a set of neighbors of vertex $a\in V_2$. 
In the partition function, $\mu$ represents a chemical potential for particles on $V_2$.
One can construct BP equations for this model.
However, it is inconvenient for practical use because of the constraint $\sum_{i=1}^Nx_i\le K$.

Therefore, we use the following alternative partition function of 
\begin{equation}
\Xi(\mu;G)=\sum_{\bm{x}\in\{0,1\}^N}\sum_{\bm{y}\in\{0,1\}^M}\exp\left(\mu'\sum_{i=1}^Nx_i+\mu\sum_{a=1}^My_a\right)
\prod_{a=1}^{M}H\left(\sum_{i\in\partial a}x_i-y_a \right),  \label{eq_s2_2}
\end{equation}
where $\mu'=\mu'(\mu,\rho_x;G)$ is a chemical potential for particles on $V_1$.
The ratio of $\mu'$ with respect to $\mu$ is defined as
\begin{equation}
\kappa = -\frac{\mu'}{\mu}. \label{eq_s2_2a}
\end{equation}
This parameter is regarded as a Lagrange multiplier for the constraint on the number of selected vertices.
The appropriate value of $\kappa$ must satisfy the condition given by
\begin{equation}
\rho_x(\mu,\kappa)=\rho_x,\quad \rho_x(\mu,\kappa)\equiv \left\langle\frac{1}{N}\sum_{i=1}^N x_i \right\rangle_{\mu}, \label{eq_s2_3}
\end{equation}
where $\langle\dots\rangle_\mu$ is a grand-canonical average with a given $\mu$.
To consider ground states, we take the large-$\mu$ limit with parameter $\kappa$ fixed.
\end{widetext}

First, we construct BP equations for Eq.~(\ref{eq_s2_2}).
Using the Bethe--Peierls approximation, the single spin probability $P_i(x)$ which $x_i$ takes $x$ is
\begin{equation}
P_i(x)\simeq Z_i^{-1}e^{-\mu\kappa x}\prod_{a\in\partial i}P_{a\rightarrow i}(x),\label{eq_s2_4}, \\
\end{equation}
where $\partial i=\{a\in V_2\mid (i,a)\in E\}$ is a set of neighbors of vertex $i\in V_1$.
$Z_{\ast}$ is a normalization factor hereinafter.
$P_{a\rightarrow i}(x)$ represents the marginal probability of $\bm{x}_{\partial a\backslash i}$ and $y_a$ under the condition $x_i=x$.
Similarly, single spin probability $P_a(y)$ that $y_a$ takes $y$ reads
\begin{equation}
P_a(y)\simeq Z_a^{-1}e^{\mu y}\sum_{\bm{x}_{\partial a}}H\left(\sum_{i\in\partial a}x_i-y \right)\prod_{i\in\partial a}P_{i\rightarrow a}(x),\label{eq_s2_5}
\end{equation}
where $P_{i\rightarrow a}(x)$ is the probability of $x_i$ taking $x$ on
the cavity graph $G\backslash a$.
These probabilities satisfy the following recursive relations under the same approximation.
\begin{widetext}
\begin{align}
P_{i\rightarrow a}(x)&\simeq Z_{i\rightarrow a}^{-1}e^{-\mu\kappa x}\prod_{b\in\partial i\backslash a}P_{a\rightarrow i}(x),\label{eq_s2_6}\\
P_{a\rightarrow i}(x)&\simeq Z_{a\rightarrow i}^{-1}\sum_{y}e^{\mu y}\sum_{\bm{x}_{\partial a\backslash i}}H\left(x+\sum_{j\in\partial a\backslash i}x_j-y \right)
\prod_{j\in\partial a\backslash i}P_{j\rightarrow a}(x). \label{eq_s2_7}
\end{align}
\end{widetext}

To take the large-$\mu$ limit later, it is convenient to introduce cavity fields $\{h_{ia}\}$ and $\{\hat{h}_{ai}\}$
 defined respectively by $P_{i\rightarrow a}(x)\propto \exp(\mu h_{ia}x)$ and $P_{a\rightarrow i}(y)\propto \exp(\mu \hat{h}_{ai}y)$.
Then, BP equations for cavity fields are given as
\begin{align}
h_{ia}&= -\kappa+\sum_{b\in\partial i\backslash a}\hat{h}_{bi},\label{eq_s2_8}\\
\hat{h}_{ai}&= -\frac{1}{\mu}\ln\left[1-\frac{1}{1+e^{-\mu}}\prod_{j\in\partial a\backslash i} \frac{1}{1+e^{\mu h_{ja}}}\right]. \label{eq_s2_9}
\end{align}
By rescaling the single spin probability as $P_{i}(x)\propto \exp(\mu \xi_{i}x)$
 and $P_{a}(y)\propto \exp(\mu \hat{\xi}_{a}y)$ using local fields
 $\{\xi_i\}$ and $\{\hat{\xi}_a\}$, 
Eqs.~(\ref{eq_s2_4}) and (\ref{eq_s2_5}) then read
\begin{align}
\xi_{i}&= -\kappa+\sum_{a\in\partial i}\hat{h}_{ai},\label{eq_s2_8s}\\
\hat{\xi}_{a}&= -\frac{1}{\mu}\ln\left[1-\frac{1}{1+e^{-\mu}}\prod_{i\in\partial a} \frac{1}{1+e^{\mu h_{ia}}}\right]. \label{eq_s2_9s}
\end{align}
One can estimate the single spin probability
 by solving BP equations~(\ref{eq_s2_8}) and (\ref{eq_s2_9}) as the loopy
 belief propagation.  
\section{Replica-Symmetric Solution}\label{sec_rs}
In this section, typical performance of BP is studied using the RS cavity method based on the simplest RS ansatz.
The RS ansatz assumes that cavity fields $\{h\}$ and $\{\hat{h}\}$ are independent random variables 
respectively following probability distributions $P(h)$ and $\hat{P}(\hat{h})$.
For biregular random graphs, it is apparent that these distributions have no variance
because of the absence of fluctuation of degree in $V_1$ and $V_2$.
We therefore introduce the cavity fields $h$ and $\hat{h}$ on $(L,R)$-biregular random graphs.
Using BP equations~(\ref{eq_s2_8}) and (\ref{eq_s2_9}), they satisfy the following RS cavity equations
in the large-$N$ limit as 
\begin{align}
h&= -\kappa+(L-1)\hat{h},\label{eq_s3_8r}\\
\hat{h}&= -\frac{1}{\mu}\ln\left[1-\frac{1}{(1+e^{-\mu})(1+e^{\mu h})^{R-1}}\right]. \label{eq_s3_9r}
\end{align}
Similarly, the local fields on $(L,R)$-biregular random graphs are set
to $\xi$ and $\hat{\xi}$, which satisfy
\begin{align}
\xi&= -\kappa+L\hat{h},\label{eq_s3_8sr}\\
\hat{\xi}&= -\frac{1}{\mu}\ln\left[1-\frac{1}{(1+e^{-\mu})(1+e^{\mu h})^{R}}\right]. \label{eq_s3_9sr}
\end{align}
According to Eq.~(\ref{eq_s2_3}), for a given $\rho_x$, the local field $\xi$ is determined as 
\begin{equation}
\frac{e^{\mu\xi}}{1+e^{\mu\xi}}=\rho_x. \label{eq_s3_10}
\end{equation}
Then, using Eqs.~(\ref{eq_s3_8sr}) and (\ref{eq_s3_10}), the appropriate parameter $\kappa$ is represented as
\begin{equation}
\kappa=L\hat{h}-\frac{1}{\mu}\ln r, \label{eq_s3_11}
\end{equation}
where $r=\rho_x/(1-\rho_x)$.

By substituting Eqs.~(\ref{eq_s3_8r}) and~(\ref{eq_s3_11}) to Eq.~(\ref{eq_s3_9r})
 and by introducing $x=e^{-\mu\hat{h}}$, one obtains the self-consistent equation as 
\begin{equation}
\left(1+r x\right)^{R-1}(1-x)=\frac{1}{1+e^{-\mu}}.  \label{eq_s3_12}
\end{equation}
The order of this solution changes depending on the value of $\rho_x$.
The solution $x$ vanishes as $\mu$ becomes large if $\rho_x<1/R$ holds.
Using Taylor expansion of the left-hand side of Eq.~(\ref{eq_s3_12}), it is estimated as
\begin{equation}
x= \frac{e^{-\mu}}{1-(R-1)r}+O(e^{-2\mu}). \label{eq_s3_13}
\end{equation}
Using Eq.~(\ref{eq_s3_9sr}), an average density of particles on $V_2$ reads
\begin{align}
\lim_{M\rightarrow\infty}\overline{\left\langle\frac{1}{M}\sum_{i=1}^M y_a \right\rangle_{\mu}}
&= \frac{1-(1+rx)^{-R}}{1-(1+rx)^{-R}+e^{-\mu}}\label{eq_s3_13a}\\
&= R\rho_x+O(e^{-\mu})\label{eq_s3_13b}
\end{align}
Taking the large-$\mu$ limit, the average cover ratio for $\rho_x<1/R$ is 
obtained as 
\begin{equation}
 \rho_y^{\mathrm{RS}}= R\rho_x. 
  \label{eq_s3_14}
\end{equation}
This relation indicates that each vertex in $V_2$ is connected to at most one chosen vertex in $V_1$.
However, if $\rho_x$ is larger than $1/R$ ($R\ge 2$), then the
solution $x$ remains constant for sufficiently large $\mu$. 
In this case, a simple solution $\rho_y^{\mathrm{RS}}=1$ is obtained by Eq.~(\ref{eq_s3_13a}) 
because the average cover ratio in Eq.~(\ref{eq_s3_14}) touches to one at $\rho_x=1/R$.

Before closing this section, we discuss the stability of the RS solutions 
using spin-glass susceptibility defined as
\begin{equation}
\chi_{\rm SG}(\mu)=\frac{1}{N+M}\overline{\sum_{i\in V_1}\sum_{a\in V_2}
(\langle x_iy_a \rangle_{\mu}-\langle x_i \rangle_{\mu}\langle y_a
\rangle_{\mu})^2}. \label{eq_s3_chi1}
\end{equation}
Another representation of $\chi_{\rm SG}$ using cavity fields is known based on the fluctuation--dissipation theorem~\cite{Rivoire2003}.
As shown in~\cite{Mezard2007}, in the case of biregular random graphs, it reads
\begin{equation}
\chi_{\rm SG}(\mu)\simeq \sum_{d=0}^{\infty} \lambda^d,\quad
\lambda=\mathbb{E}\left[{\sum_{j\in\partial a\backslash i;\, b\in\partial j\backslash a}
\left(\frac{\partial \hat{h}_{ai}}{\partial \hat{h}_{bj}}\right)^2}\right],
\label{eq_s3_chi2}
\end{equation}
where $\mathbb{E}[\cdots]$ represents an average over vertices in $V_1$ and random graphs.
The cavity fields are mutually correlated in general.
 Actually, BP cannot converge any more~\cite{Zdeborova2009a} if the susceptibility diverges.
In this sense, the divergence of $\chi_{\rm SG}$ not only means the instability of the RS solution,
 but also poor typical performance of BP as an approximation algorithm in the static sense.
In the low-density region where $\rho_x<1/R$ holds, it is evaluated as
\begin{align}
\lambda &= (L-1)(R-1)\left[\frac{r(1-x)}{1+rx}\right]^2, \nonumber\\
&\rightarrow  (L-1)(R-1)r^2 \quad(\mu\rightarrow\infty).
\label{eq_s3_chi3}
\end{align}
The RS-RSB threshold in the large-$\mu$ limit therefore reads
\begin{equation}
\rho_x^{\mathrm{RS}}=\frac{1}{1+\sqrt{(L-1)(R-1)}}.\label{eq_s3_chi4}
\end{equation}
Otherwise, the other threshold $\rho_x^{\mathrm{RS}'}(>1/R)$ satisfies
\begin{equation}
(L-1)(R-1)\left[\frac{\rho_x^{\mathrm{RS}'}(1-x^\ast)}{(1-\rho_x^{\mathrm{RS}'})(1+x^\ast)}\right]^2=1, 
\label{eq_s3_chi5}
\end{equation}
where $x^\ast$ is a solution of
\begin{equation}
\left(1+ \frac{\rho_x^{\mathrm{RS}'}}{1-\rho_x^{\mathrm{RS}'}}x^\ast\right)^{R-1}(1-x^\ast)=1. \label{eq_s3_chi6}
\end{equation}

In Fig.~\ref{fig_rs}, RS-RSB thresholds $\rho_x^{\mathrm{RS}}$ and $\rho_x^{\mathrm{RS}'}$ on $(3R,R)$-regular random graphs are shown
as a function of $R$.
Except for the case $R=1$, the RS-RSB thresholds separate RS and RSB
regions. The RSB region remains for a finite $\rho_x$ while they converge to zero in the large-$R$ limit.

\begin{figure}
\includegraphics[width=0.95\linewidth]{./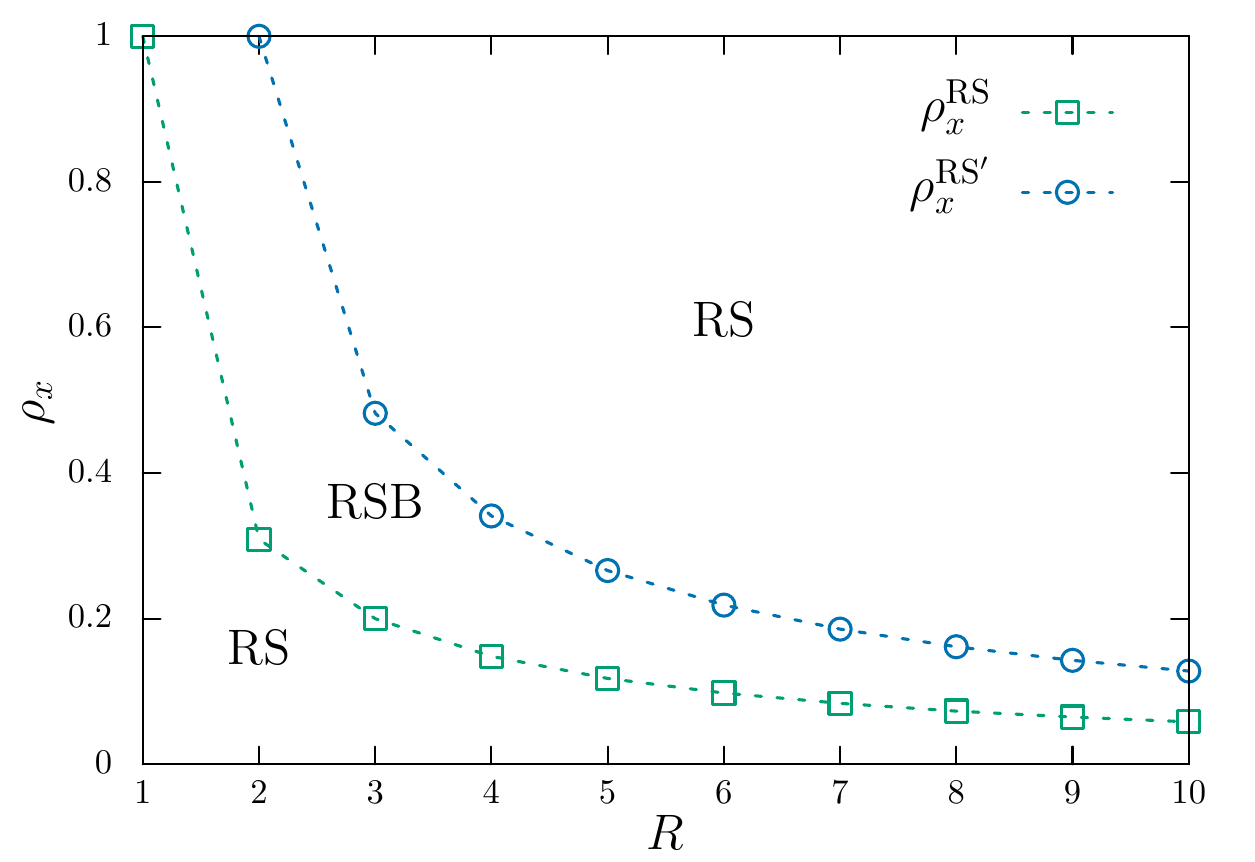}
\caption{RS-RSB thresholds $\rho_x^{\mathrm{RS}}$ (squares) and $\rho_x^{\mathrm{RS}'}$ (cycles)
 on $(3R,R)$-regular random graphs.}\label{fig_rs}
\end{figure}

\section{Typical Analysis of Greedy Algorithm}\label{sec_gre}
In this section, typical behavior of the greedy algorithm on $(L,R)$-biregular random graphs
 is investigated in the low-density region of $\rho_x<1/R$.
The corresponding RS solution indicates that one can choose vertices in $V_1$ without overlapping their neighbors.
It is therefore sufficient to analyze the fraction of vertices in $V_1$ with maximum degree
 during the deletion process.
Here we analyze a rate equation that is used frequently for analyses of similar greedy algorithms~\cite{Karp1981,Weigt2002}.
Let $V(T)$ be the expected number of vertices with maximum degree in
 $V_1$ at $T$-th step of the algorithm.
By the definition of the algorithm in sec.~\ref{sec_app}, we find that
\begin{equation}
NV(T+1)
=NV(T)-1-L(R-1)\frac{NV(T)}{N-RT}+O(N^{-1}), \label{eq_s4_gre1}
\end{equation}
where the assumption $L=O(1)$ is used.
By introducing $v(t)=V(T)/N$ and $t=T/N$, 
we obtain the following differential equation in the large graph limit. 
\begin{equation}
\frac{dv(t)}{dt}=-1-\frac{L(R-1)}{1-Rt}v(t).\label{eq_s4_gre2}
\end{equation}
Under the initial condition $v(0)=1$, the solution reads
\begin{equation}
v(t)=\frac{(L-1)(R-1)(1-Rt)^{\frac{L(R-1)}{R}}-(1-Rt)}{LR-L-R}.\label{eq_s4_gre4}
\end{equation}
Let $\rho_x^{\mathrm{g}}$ be a threshold below which vertices with the maximum degree are left at the end of the algorithm.
Considering that $t$ represents a fraction of chosen vertices in $V_1$, we find $v(\rho_x^{\mathrm{g}})=0$, that is,
\begin{equation}
\rho_x^{\mathrm{g}}=\frac{1}{R}\left\{1-[(L-1)(R-1)]^{-\frac{R}{LR-L-R}}\right\}.\label{eq_s4_gre5}
\end{equation}
If $\rho_x<\rho_x^{\mathrm{g}}$, then no chosen vertex in $V_1$ has overlapped neighbors in $V_2$
resulting in good typical performance of the algorithm, i.e., $\rho_y^{\mathrm{g}}=R\rho_x$.
However, the algorithm typically mistakes selections of vertices in $V_1$ and
underestimates the average cover ratio.
Results show that $\rho_x^{\mathrm{g}}\le \rho_x^{\mathrm{RS}}$
 where the equality holds for $R=1$, 
 indicating that BP is better than the greedy algorithm in terms of the typical performance threshold.

\section{Analysis of LP relaxation}\label{sec_lp}
An LP-relaxed value of the max-COV on any biregular graph is evaluated exactly using the LP duality~\cite{Rockafellar1972} as follows.

\begin{theorem}{}\label{thm_1}
An LP-relaxed value of the max-COV on any $(L,R)$-biregular graph is $LK$ if $K\le 1/R$ holds.
\end{theorem}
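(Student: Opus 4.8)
The plan is to sandwich the optimum of the LP~(\ref{eq_s1_3}): first exhibit an explicit primal feasible assignment whose objective equals $LK$, and then invoke the weak duality theorem (via an explicit dual feasible point, or equivalently a one-line summation over the covering constraints) to show that no feasible assignment can exceed $LK$. Throughout, the only structural facts used are biregularity --- $|\partial i|=L$ for every $i\in V_1$ and $|\partial a|=R$ for every $a\in V_2$ --- and the handshake identity it forces, $LN=RM$, i.e.\ $M=LN/R$; no randomness enters, which is consistent with the claim being stated for \emph{any} $(L,R)$-biregular graph. I read the hypothesis as $\rho_x=K/N\le 1/R$ (equivalently $RK/N\le 1$).

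For the lower bound, take the uniform assignment $x_i=K/N$ for all $i\in V_1$ and $y_a=RK/N$ for all $a\in V_2$. Then $\sum_{i=1}^N x_i=K$, so the cardinality constraint holds; for each $a\in V_2$ one has $\sum_{i\in\partial a}x_i=R(K/N)=y_a$, so the covering constraint holds (with equality); and $0\le x_i\le 1$ since $K\le N$, while $0\le y_a\le 1$ precisely because $\rho_x=K/N\le 1/R$. This last point is the only place the hypothesis is used. The objective is $\sum_{a=1}^M y_a=M\cdot RK/N=(LN/R)(RK/N)=LK$, so the LP optimum is at least $LK$.

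For the upper bound, let $(x,y)$ be any feasible point of~(\ref{eq_s1_3}). Summing the covering constraints over $a$ and swapping the order of summation (each $i$ is counted $|\partial i|=L$ times),
\begin{equation}
\sum_{a=1}^M y_a \;\le\; \sum_{a=1}^M\sum_{i\in\partial a}x_i \;=\; \sum_{i=1}^N|\partial i|\,x_i \;=\; L\sum_{i=1}^N x_i \;\le\; LK ,
\end{equation}
the final step using the cardinality constraint. In the dual language quoted in the text, this is weak duality applied to the dual feasible point that assigns multiplier $1$ to every covering constraint, multiplier $L$ to the cardinality constraint, and $0$ to all box constraints; one checks this is dual feasible and has dual objective $Kw=LK$. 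Combining with the previous paragraph gives LP value $=LK$.

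The computation is short, so there is no serious obstacle; the two points that merit care are writing down the dual of~(\ref{eq_s1_3}) correctly (or, equivalently, getting the summation swap $\sum_a\sum_{i\in\partial a}x_i=L\sum_i x_i$ right, which is where biregularity on the $V_1$ side is used), and recognizing that the hypothesis $\rho_x\le 1/R$ is exactly what keeps the natural matching primal point $y_a=RK/N$ inside $[0,1]$. For $\rho_x>1/R$ the same dual still certifies the bound $LK$, but now the primal point must be clipped to $y_a=1$ and the value falls to $M=LN/R<LK$; that regime lies outside the theorem and need not be addressed.
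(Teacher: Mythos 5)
Your proof is correct, and its overall skeleton matches the paper's: both exhibit the uniform fractional primal point $x_i=K/N$, $y_a=RK/N$ (whose feasibility, as you rightly isolate, is exactly where $K/N\le 1/R$ enters) and then certify optimality by weak duality. The difference is in the dual certificate. You sum the covering constraints with unit weights and use biregularity on the $V_1$ side, $\sum_{a}\sum_{i\in\partial a}x_i=L\sum_i x_i\le LK$, which corresponds to the dual point $q_a\equiv 1$, $p=L$; this certificate is feasible for every $K$ and yields the bound $LK$ with no side conditions, so the hypothesis is needed only on the primal side. The paper instead works with the Lagrangian and the point $(p,q)=(M/N,K/N)$, evaluating $\max_{\bm x,\bm y}L$ at $x_i=y_a=1$; that evaluation is legitimate only when the stated sign conditions $-p+Lq\ge 0$ hold, which for this choice amounts to $LK\ge M$, i.e.\ the boundary of the regime considered — so your certificate is not only more elementary but also more robustly verified. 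Your closing remark about the high-density regime (clipping to $y_a=1$, LP value $M$) is consistent with the paper's claim that the LP value equals $\rho_y^{\mathrm{RS}}=1$ there. The only cosmetic slip is the undefined symbol $w$ in ``dual objective $Kw=LK$''; you mean the multiplier $p=L$ on the cardinality constraint.
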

\begin{proof}
$x_i=K/N$ ($1\le i\le N$), $y_a=RK/N$ ($1\le a\le M$) is a feasible solution of LP-relaxed max-COV~(\ref{eq_s1_3}).
The value of the cost function is then $RKM/N=LK$.

The Lagrangian function of Eq.~(\ref{eq_s1_3}) is written as
\begin{align}
L(\bm{x},\bm{y},{p},\bm{q})=&\sum_{a=1}^My_a+p\left(K-\sum_{i=1}^Nx_i\right)\nonumber\\
&+\sum_{a=1}^Mq_a\left(\sum_{i\in\partial a}x_i-y_a\right),\label{eq_s5_lp1}
\end{align}
where $p\in\mathbb{R}$ and $\bm{q}\in\mathbb{R}^M$.
Using the weak duality theorem, one finds the following inequalities.
\begin{align}
\max_{\bm{x},\bm{y}}\min_{p,\bm{q}}L(\bm{x},\bm{y},{p},\bm{q})&\le \min_{p,\bm{q}}\max_{\bm{x},\bm{y}}L(\bm{x},\bm{y},{p},\bm{q})\nonumber\\
&\le \max_{\bm{x},\bm{y}}L(\bm{x},\bm{y},{p},\bm{q}).\label{eq_s5_lp2}
\end{align}
Consequently, $\max_{\bm{x},\bm{y}}L(\bm{x},\bm{y},{p},\bm{q})$ is an upper bound of the LP-relaxed value.
Because Eq.~(\ref{eq_s5_lp1}) is represented by
\begin{equation}
L(\bm{x},\bm{y},{p},\bm{q})= \sum_{i=1}^N\left(-p+\sum_{a\in\partial i}q_a\right)x_i+\sum_{a=1}^M(1-q_a)y_a+pK,\label{eq_s5_lp3}
\end{equation}
where a solution $x_i=y_a=1$ ($1\le i\le N,1\le a\le M$) realizes the maximum of the function with $p,\bm{q}$ satisfying
$-p+\sum_{a\in\partial i}q_a\ge 0$ and $1-q_a\ge 0 (1\le a\le M)$.
Assuming that $q_a=q$ holds for $1\le a\le M$, one finds that
\begin{equation}
\max_{\bm{x},\bm{y}}L\left(\bm{x},\bm{y},p,{q}\right)=-pN+LNq+M-Mq+pK.\label{eq_s5_lp4}
\end{equation}
On the right-hand side, setting $(p,q)=(M/N,K/N)$ results in $LK$, which indicates that the upper bound of the LP-relaxed max-COV on $(L,R)$-biregular graph is $LK$.
Because the approximate solution denoted above achieves this bound, the LP-relaxed value is equivalent to the bound.
\qed
\end{proof}

This theorem claims that $\rho_y=R\rho_x$ ($\rho_x\le 1/R$), suggesting that LP relaxation typically finds 
good approximate values in the RS regime.
The LP-relaxed value is equivalent to $\rho_y^{\mathrm{RS}}$ in the high-density regime where $\rho_x>1/R$.

\section{Numerical results}\label{sec_num}
This section presents description of some numerical results for validation of the theoretical analyses.
Here, we set $(L,R)=(9,3)$ as an example.
Then, the RS-RSB threshold and the typical performance threshold of the greedy algorithm are evaluated as
$\rho_x^{\mathrm{RS}}=0.2$ and $\rho_x^{\mathrm{g}}=(1-2^{-4/5})/3=0.1418\dots$.
Biregular random graphs are generated based on implementation of the configuration model~\cite{Newman2001}.
At least $400$ random graphs are used to take a random graph average.

\subsection{Average cover ratio}\label{sec_num_ave}
To examine the validity of theoretical analyses on the typical
 performance of approximation algorithms, the average cover ratio $\rho_y$ is evaluated
 using several methods.

 {
 We employ a BP-guided decimation (BPD) algorithm~\cite{Mezard2009} as a variant of loopy BP
 to obtain a feasible solution satisfying all the constraints.
 The algorithm fixes a value of a variable based on a solution of BP equations~(\ref{eq_s2_7}) until all variables are fixed to either zero or one.
 When a variable is fixed, BP equations are updated by applying the fixed value.
 If the number of variables in $V_1$ fixed to one reaches to $K$, the remainders of $\bm{x}$ are immediately fixed to zero to satisfy the constraint.
 An approximate value of the algorithm is thus a function of the parameter $\kappa$, which depends on inputs, i.e., $\mu$, $K$ and a graph.
 Here, the BP equations with $\mu=20$ are solved iteratively up to $150$ steps, which enables the algorithm to fix a variable practically
 even if the iterations cannot reach to the RS fixed point.
 The parameter $\kappa$ is tuned to maximize an approximate value while the detail will be reported elsewhere.
 In addition, the RS estimation of average cover ratio $\rho_y^{\mathrm{RS}}$ in the
 RS regime $\rho_x<\rho_x^{\mathrm{RS}}$ is used for comparison to BPD
 though its typical performance is possibly evaluated directly as in~\cite{Ricci-Tersenghi2009}.
 }
 
 The theoretical result presented in section~\ref{sec_lp} is used for LP relaxation, which is valid 
 for arbitrary biregular graphs.
Average approximate values of the greedy algorithm are  estimated numerically with $N=10^3$.

 We also evaluate the average optimal value using
the replica exchange Monte Carlo (EMC) method~\cite{Hukushima1996}. 
Results show that single-spin flip updates of the model~(\ref{eq_s2_1})
takes a long relaxation time for equilibration even using the exchange MC,
indicating the existence of deep valleys of the free energy. 
Consequently, it is necessary to accelerate equilibration for a system with sufficiently
small chemical potential $\mu$. 
To avoid such slow relaxation, we consider an alternative lattice--gas model on bipartite graph $G$ represented as
\begin{align}
\Xi_1&(\mu;G)\nonumber\\
&=\sum_{\bm{x}\in\{0,1\}^N}\exp\left(\mu\sum_{a=1}^M\theta\left(\sum_{i\in\partial a}x_i\right)\right)
\delta\left(K,\sum_{i=1}^Nx_i\right),
  \label{eq_s6_1}
\end{align}
where $\theta(x)$ takes one if $x> 0$ and zero otherwise, and $\delta(x,y)$ is Kronecker's delta.
In this model, variables $\bm{y}$ are eliminated because $y_a=\theta(\sum_{i\in\partial a}x_i)$ holds
 in the large-$\mu$ limit.
The ground states of the alternative model therefore correspond to the optimal solutions of the max-COV on the same graph.
Moreover, single-spin flip updates in the model are substantially equivalent to multi-spin updates
 in the original model~(\ref{eq_s2_1}), which makes the relaxation time to equilibrium states markedly short. 
Because optimal values are invariant by replacing inequality constraint $\sum_{i=1}^N x_i\le K$ to equality one,
the equality constraint is adopted and Kawasaki dynamics for a density-conserved system~\cite{Kawasaki1966a} is applied.

\begin{figure}
\includegraphics[width=0.95\linewidth]{./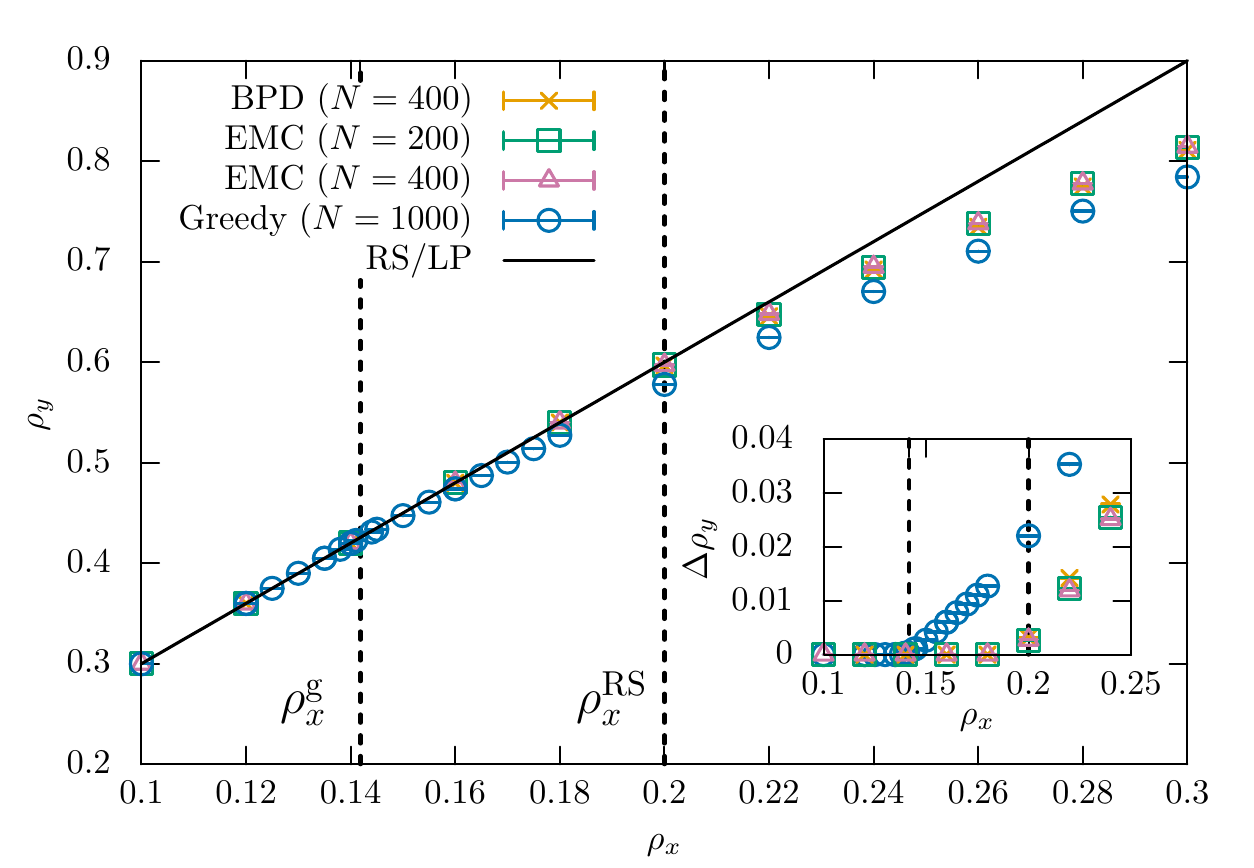}
\caption{Maximum cover ratio $\rho_y$ obtained by some approximation algorithms as a function of $\rho_x$. 
 Average optimal cover ratios are estimated for the cardinality $N=200$ (squares) and $400$ (triangles)
 by the replica exchange Monte Carlo (EMC) method, for $N=1000$ by the greedy algorithm (circles),
 and for $N=400$ by the {BP-guided decimation (BPD) algorithm} (cross marks).
 The solid line is the RS solution, equivalent to the LP relaxed value. 
Two vertical lines respectively represent the analytical performance threshold $\rho_x^{\mathrm{g}}$ of the greedy algorithm and
 the RS-RSB threshold $\rho_x^{\mathrm{RS}}$.
 (Inset) The difference $\Delta\rho_y$ between the 
 RS optimal cover ratio and
  numerical estimations by corresponding approximation algorithms as a function of $\rho_x$.}
 \label{fig_1}
\end{figure}

The results are presented in Fig.~\ref{fig_1}.
Numerical results are compatible to the RS estimation nearly up to the
RS-RSB threshold $\rho_x^{\mathrm{RS}}=0.2$ for the {BPD} and EMC,
and up to the typical performance threshold $\rho_x^{\mathrm{g}}$ predicted
analytically for the greedy algorithm.
As shown in the inset of Fig.~\ref{fig_1}, 
the average cover ratio by the greedy algorithm deviates from the 
average optimal ratio above $\rho_x^{\mathrm{g}}$, while other estimates
split above $\rho_x^{\mathrm{RS}}$, 
indicating that the Bethe--Peierls approximation and LP relaxation are no longer appropriate. 
In fact, as shown in Fig.~\ref{fig_lBP}, the loopy BP for $N=400$ fails its convergence above $\rho_x\simeq
0.195$, which may imply the existence of dynamical one-step RSB phase. 
{It is expected that BPD finds approximate solutions with good accuracy
 if a corresponding loopy BP converges to a fixed point.
Otherwise, decimation in BPD is performed based on a wrong estimation by the Bethe--Peierls approximation.
We emphasize, however, that BPD finds relatively good approximate solutions even in the RSB phase
as shown in Fig.~\ref{fig_1}.}
From these observations, we confirm that statistical--mechanical analyses 
{successfully predict typical performance thresholds by approximation algorithms.}

\begin{figure}
 \includegraphics[width=0.95\linewidth]{./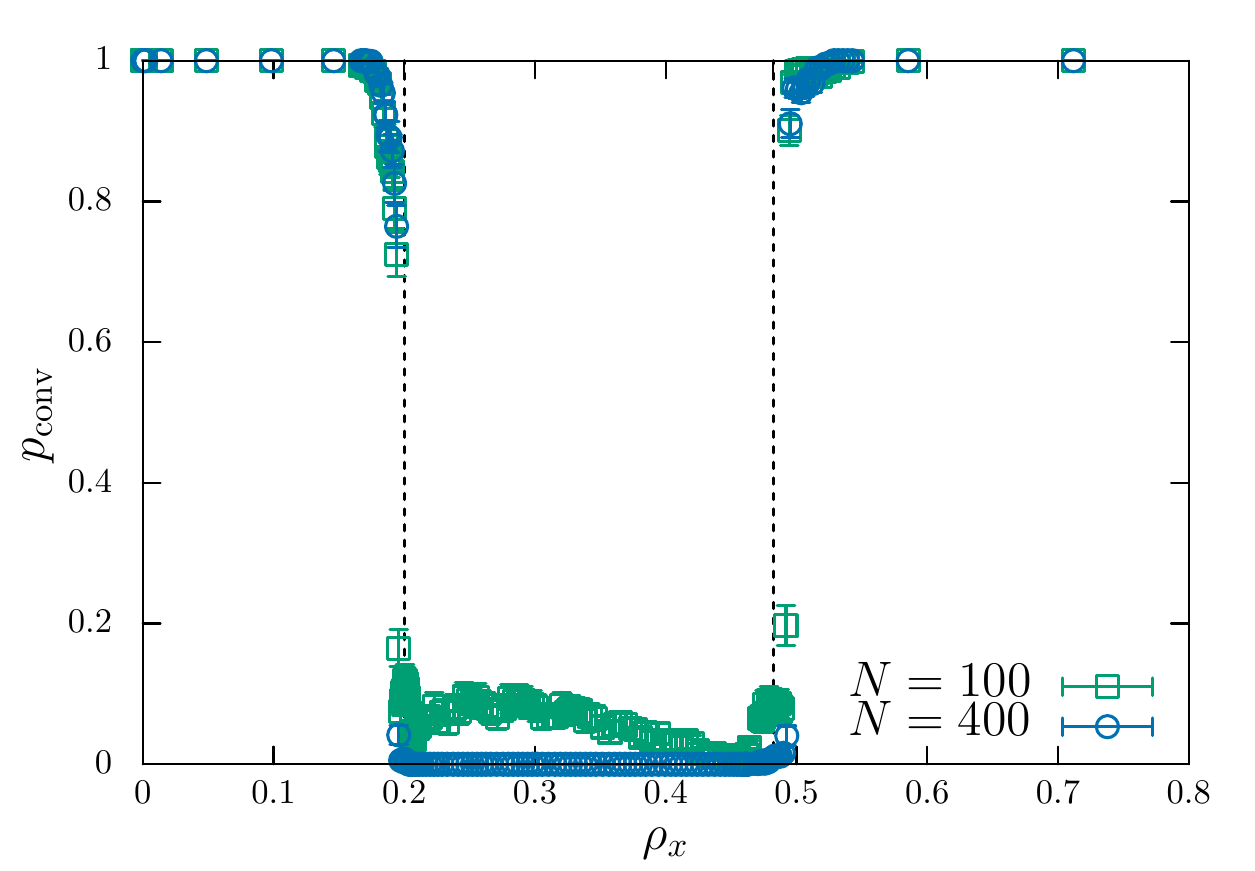}
 \caption{
 Probability of successfully convergent samples in the loopy BP for
 $(9,3)$-biregular random graphs as a
 function of $\rho_x$. The dotted vertical lines represents RS-RSB
 thresholds shown in Fig.~\ref{fig_rs}. 
 }
 \label{fig_lBP}
 \end{figure}

\subsection{Greedy algorithm and its variant}
Here we examine the greedy algorithm more closely.
To validate the correctness of our analysis, we specifically examine a fraction of
selected vertices without the maximum degree.
The value $r_{\mathrm{g}}$ is evaluated by
\begin{equation}
r_{\mathrm{g}}=1-\frac{\rho_x^{\mathrm{g}}}{\rho_x},\label{eq_g6_1}
\end{equation}
where $\rho_x^{\mathrm{g}}$ is given as Eq.~(\ref{eq_s4_gre5}).
The algorithm finds an optimal solution of the problem if $r_{\mathrm{g}}=0$.
As depicted in Fig.~\ref{fig_2}, the analytical estimations of $r_{\mathrm{g}}$ averaged over biregular random graphs
 agree well with the numerical results.
The fraction arises at $\rho_x^{\mathrm{g}}$ corresponding to the typical performance threshold.

Additionally, we modify the greedy algorithm to reduce the gap of its typical performance to that of BP.
As suggested above, the reason lies in the point that the greedy algorithm often chooses wrong vertices leading to 
shortage of vertices with maximum degree.
To avoid the situation, one must select a vertex to preserve as many
vertices to a maximum degree as possible,
meaning that the optimal selection requires an exhaustive search.
 We therefore propose a modified algorithm in consideration of
 the influence of the selection on other vertices in $V_1$,
  the simplest improved algorithm toward the optimal selection.
Let $\partial^2 i=\{j\in V_1\backslash i\mid j\in \partial a\,(\forall a\in\partial i)\}$
 be a set of the second neighbors of vertex $i\in V_1$.
We also define the subset of vertices with the maximum degree in $V_1$ by $W_1$.
The modified greedy algorithm is given as follows:
at each step, (i) choose vertex named $i\in W_1$ so that $|\partial^2i\cap W_1|$ is minimized,
(ii) delete vertices neighboring to vertex $i$ from $V_2$, and (iii) update $W_1$ and $V_1$ and return to (i) if $|V_1|>N-K$.
In Fig.~\ref{fig_2}, the fraction $r_{\mathrm{g}}$ of the modified
 algorithm is also shown with that of the original greedy algorithm.
 As shown in Fig.~\ref{fig_2}, the typical performance threshold of the proposed algorithm is
 improved, although it is still below the RS-RSB threshold.
This fact suggests that the typical performance of greedy algorithms depends of approximate choices and BP selects vertices better than 
those greedy selections.

\begin{figure}
\includegraphics[width=0.95\linewidth]{./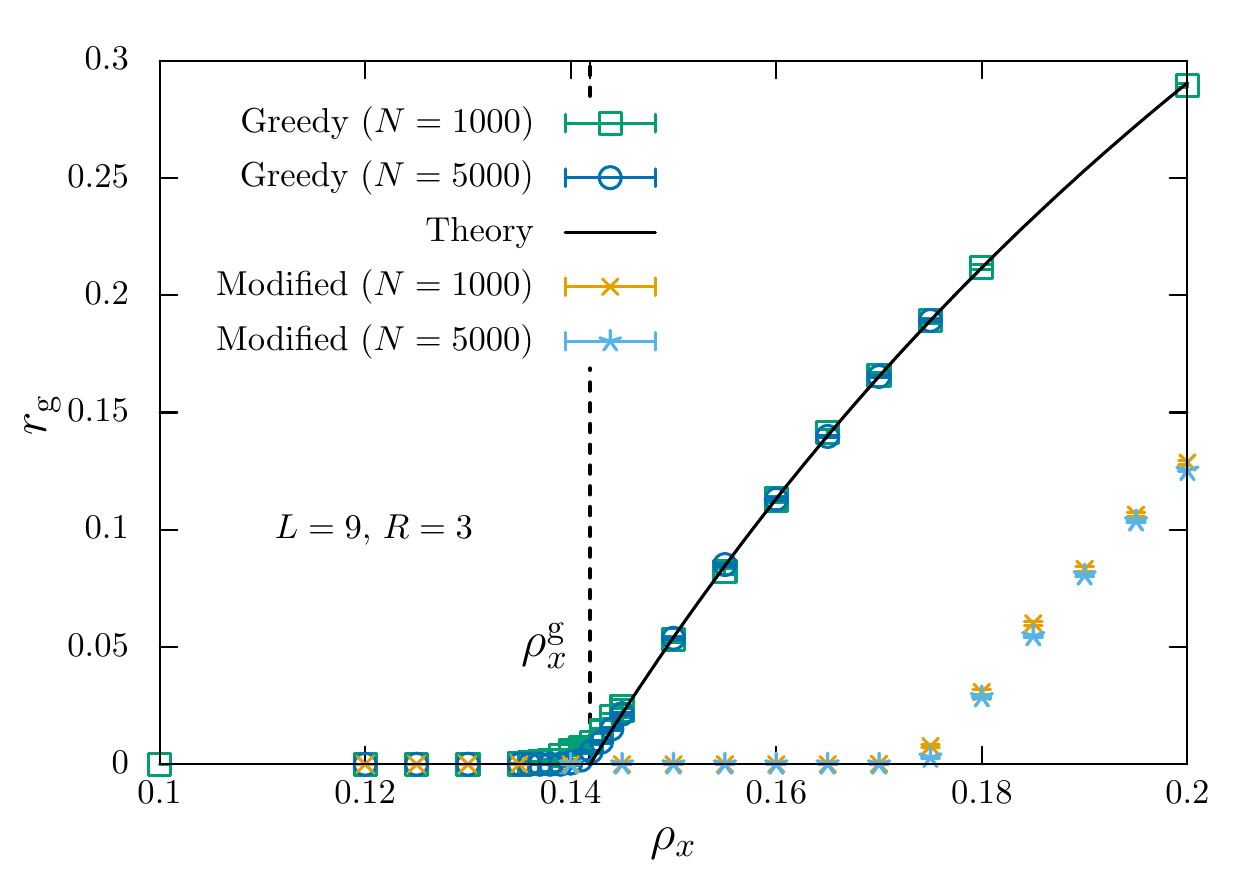}
\caption{Fraction $r_{\mathrm{g}}$ of selected vertices without maximum degree until the greedy algorithm stops as a function of $\rho_x$.
Symbols represent numerical results of the greedy algorithm with cardinality $N=1000$ (squares) and $5000$ (circles) 
and modified one with $N=1000$ (cross marks) and $5000$ (stars).
The solid line exhibits the analytical estimation of the fraction $r_g$.}\label{fig_2}
\end{figure}

\subsection{Randomized rounding of LP relaxation}
In Sec.~\ref{sec_num_ave}, we examine the typical performance of LP relaxation in terms of its approximate value.
In this subsection, randomized rounding, a practical means of constructing a feasible integer solution from LP relaxation, is applied to LP-relaxed solutions.
It is noteworthy that the typical performance of randomized rounding probably depends on the selection of an LP solver and its setting.
Here, we use IBM ILOG CPLEX with a default setting.
The approximate value of the rounded solution is compared to that of the RS solution in the RS phase, which is $LK$.
One considers that the rounding finds an optimal integer solution of the problem if two estimates mutually coincide.
We define a success ratio $p_r$ by the fraction of random graphs on which the rounding finds an optimal solution.
Fig.~\ref{fig_3} presents the average approximation ratio $\rho_y^{r}$ and the success ratio $p_r$ as a function of $\rho_x$, which strongly suggests that the randomized rounding exhibits a phase transition in terms of its typical performance.
The threshold $\rho_x^{\mathrm{r}}$ is less than $0.1$, which is much 
lower than $\rho_x^{\mathrm{g}}$,
whereas the LP-relaxed value is regarded as nearly optimal in the RS region $\rho_x<\rho_x^{\mathrm{RS}}=0.2$.
We therefore conclude that the typical performance of the randomized rounding
of LP-relaxed solution is inferior to that of the greedy algorithm.

\begin{figure}
\begin{center}
\includegraphics[width=0.95\linewidth]{./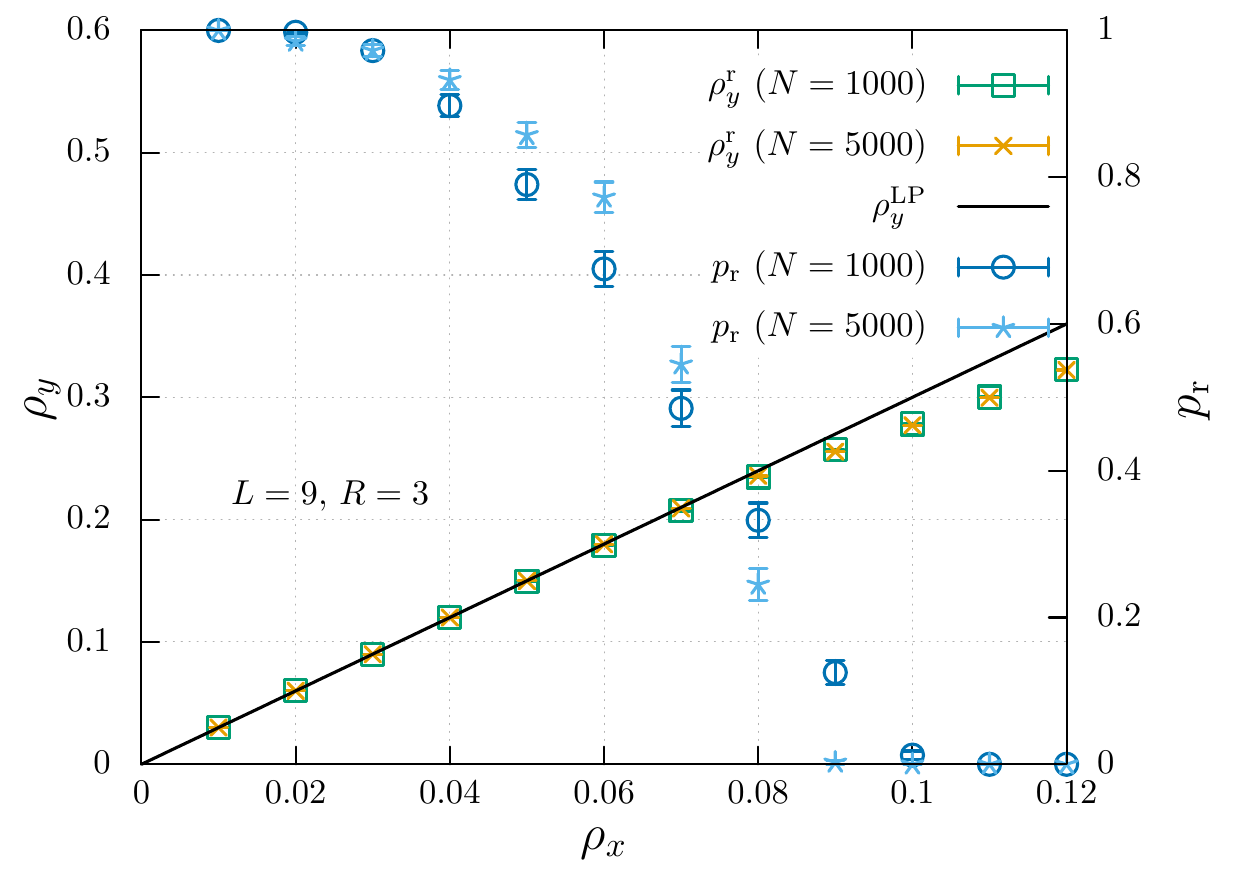}
\caption{Typical performance of randomized rounding of LP relaxation as a function of $\rho_x$ (left vertical axis). 
Symbols represent approximation values $\rho_y^{\mathrm{r}}$ of rounded LP-relaxed solutions with cardinality $N=1000$ (squares) and $5000$ (cross marks).
The solid line represents the average cover ratio equivalent to that of LP relaxation in the region shown.
Squares and stars respectively represent the success ratio $p_r$ (right vertical axis) of randomized rounding with $N=1000$ and $5000$.
}\label{fig_3}
\end{center}
\end{figure}

\section{Summary and Discussion}
As described in this paper, we investigate the typical performance of approximation algorithms called belief propagation, greedy algorithm, and
linear-programming relaxation for maximum coverage problem on sparse biregular random graphs.
The typical performance of BP is studied by application of the RS cavity method to a correspondent hard-core lattice--gas model.
Results show that, in the large-$\mu$ limit, there exist two distinct RS-RSB thresholds regarded as typical performance thresholds of BP.
In addition, the greedy algorithm performance and LP relaxation were studied especially in the low-density region.
Results show that the typical performance threshold of the greedy algorithm is lower than that of BP
and that LP-relaxed values are always equivalent to the RS solutions leading to the threshold equivalent to that of BP.
Those analytical results were validated by executing some numerical simulations.
Results of additional numerical studies suggest that BP typically works better than the modified greedy algorithm and
 that randomized rounding of LP-relaxed solutions has a lower threshold than the greedy algorithm.

To assess the typical performance of BP as an approximation algorithm, we concentrated on statistical--mechanical analysis of max-COV
up to the RS level.
Further analyses based on the one step RSB will be necessary to reveal statistical--mechanical properties of the problem and
typical performance analysis of another algorithm called survey propagation.
Another possible avenue of future work is the extension of our analyses to other random bipartite graphs.
As with min-VCs, it is an attractive question whether the magnitude relation of typical performance thresholds changes depending on
the random graph ensembles.

Our results provide not only respective typical performance of approximation algorithms but also their suggestive mutual relations.
Specifically examining LP relaxation, it is worth emphasizing that the LP relaxation finds good approximate values compared to optimal values but 
the typical performance of its randomized rounding has the smallest threshold among approximation algorithms studied here.
To fill the gap of thresholds, it is important to examine modifications of LP relaxation such as the cutting-plane approach~\cite{Schrijver1998}.
As for the greedy algorithms and their modification, numerical results suggest that evaluation of the influence of their deletion process
 affects the marked improvement of the typical performance threshold.
The fact that BP is better than the greedy algorithm and its modification
indicates that BP incorporates the influence more efficiently.
These suggestions are expected to be of great help to understand properties and relations of approximation algorithms in terms of typical performance.
In addition, our analyses of the greedy algorithm and randomized
rounding of LP relaxation illustrate that
typical case and worst case evaluations capture different notions of approximate performance in optimization problems. 
This fact indicates the importance of the typical-case analysis of approximation algorithms.
We hope that the arguments and results presented herein stimulate further studies and that the typical performance analyses of approximation algorithms
 will attract the interest of researchers in many diverse fields.

\begin{acknowledgements}
ST warmly thanks T. Takaguchi for stimulating this study.
The use of IBM ILOG CPLEX has been supported by the IBM Academic Initiative.
This research was supported by JSPS KAKENHI Grant Nos.~25120010 (KH),~16K16011 (TM), and~15J09001 (ST).
\end{acknowledgements}

\bibliography{MaxCov_typ_2}

\end{document}